\documentclass[preprint]{article}

\usepackage{neurips_2026}
\usepackage[utf8]{inputenc}
\usepackage[T1]{fontenc}
\usepackage{hyperref}
\usepackage{url}
\usepackage{booktabs}
\usepackage{amsfonts}
\usepackage{nicefrac}
\usepackage{microtype}
\usepackage{xcolor}
\usepackage{amsmath}
\usepackage{amssymb}
\usepackage{graphicx}

\usepackage{tikz}
\usetikzlibrary{arrows,cd,decorations.markings,shapes.geometric,shapes}
\usetikzlibrary{positioning,calc,arrows.meta,fit}
\usetikzlibrary{decorations.pathreplacing}

\usepackage{pgfplots}
\usepackage{subcaption}
\usetikzlibrary{positioning,decorations.pathreplacing}
 \pgfplotsset{compat=1.18}

\usepackage{algorithm}
\usepackage{algpseudocode}

\usepackage{microtype}
\usepackage{graphicx}
\usepackage{subcaption}
\usepackage{booktabs} % for professional tables

\usepackage{amsmath}
\usepackage{amssymb}
\usepackage{mathtools}
\usepackage{amsthm}

\theoremstyle{plain}
\newtheorem{theorem}{Theorem}[section]
\newtheorem{proposition}[theorem]{Proposition}
\newtheorem{lemma}[theorem]{Lemma}

\newtheorem{definition}[theorem]{Definition}

\theoremstyle{remark}
\newtheorem{remark}[theorem]{Remark}

\usepackage{amsmath,amssymb,amsthm,mathtools,booktabs}

\usepackage{tikz}
\usepackage{xcolor}

\usepackage{tcolorbox}

\usepackage{algorithm}              % the float
\usepackage{xcolor}

\usepackage{tikz-cd}
\usetikzlibrary{arrows.meta,positioning,calc,decorations.pathreplacing,shapes.geometric}
\usepackage{pgfkeys}
\usepackage{pgfplots}
\usepackage{amsthm}
\usepackage{stmaryrd}   % for \Vdash
\usepackage{booktabs}   % pretty tables
\usepackage{tabularx}   % flexible-width tables
\usepackage{microtype}  % nicer text

\providecommand{\De}

\pgfplotsset{compat=1.18}

\newcommand{\Kl}{\mathsf{Kl}} % Kleisli category notation
\newcommand{\Dist}{\mathsf{Dist}}  % Distribution (probability) monad
\title{Causal Density Functions}

\author{%
  Sridhar Mahadevan\thanks{Academic affiliation: Research Professor, University of Massachusetts, Amherst; See webpage at \url{https://people.cs.umass.edu/~mahadeva/Site/About_Me.html}} \\
  Adobe Research\\
  San Jose, CA\\
  \texttt{smahadev@adobe.com} \\
}

\begin{document}

\maketitle

\begin{abstract}
  We study the full density ratio between a specified intervention regime
  \(P_a\) and an observational regime \(P_0\),
  \(\rho_a=dP_a/dP_0\), under the prerequisite \(P_a\ll P_0\).
  We call the regime-indexed ratio a \emph{causal density function} when
  \(P_a\) is an identified or directly observed interventional law.  The
  underlying Radon--Nikodym derivative and the identity
  \[
  \mathbb{E}_{a}[f(Z)]
  =
  \mathbb{E}_{0}\!\left[f(Z)\rho_a(Z)\right]
  \]
  are classical importance weighting, not new identification results.
  Our narrower question is whether retaining the entire pointwise ratio is
  useful as a reusable diagnostic across several downstream functionals.
  We evaluate a two-density plug-in baseline through held-out moment
  transport and overlap stress tests on synthetic and perturbation data.
  We also report a pairwise graph-scoring heuristic as a negative result:
  its F1 is \(0.10\) on a synthetic DAG, \(0.12\) on Sachs, and \(0.33\) on
  a multi-regime chain.  These experiments do not establish an estimation
  advantage over direct density-ratio, inverse-probability, Riesz, or
  doubly robust methods; they instead delimit what the pointwise ratio
  target and the present plug-in estimator do and do not provide.
\end{abstract}

\section{Introduction}

Causal inference \citep{rubin-book,pearl-book} asks how a system changes under
interventions.  A common target is an interventional expectation such as
\[
  \mathbb{E}[Y\mid \mathrm{do}(X=x)].
\]
This paper studies a more ambitious object: the full pointwise ratio between
an interventional law and an observational law.  Let \(P_0=P_{\mathrm{obs}}\)
and let \(P_a\) be the law under a specified regime \(a\).  If
\(P_a\ll P_0\), define
\[
  \rho_a(z)
  =
  \frac{dP_a}{dP_0}(z).
\]
This is the ordinary Radon--Nikodym density ratio.  We use the term
\emph{causal density function} only when \(P_a\) is a causally meaningful
law that has already been identified from explicit assumptions or is observed
through randomized/controlled perturbation data.  The derivative does not
itself identify \(P_a\) from \(P_0\).

For every integrable \(f\), the classical change-of-measure identity gives
\[
  \mathbb{E}_{a}[f(Z)]
  =
  \mathbb{E}_{0}\!\left[f(Z)\rho_a(Z)\right].
\]
This is importance weighting and is not novel.  It nevertheless supplies an
auditable diagnostic: on held-out samples, one can test whether the same
estimated ratio transports several prespecified moments, compare those gaps
with the corresponding unweighted gaps, and report uncertainty.  Passing a
finite collection of moment tests does not establish equality of measures;
failing them is direct evidence not to trust downstream reuse of the ratio.

Estimating the full ratio is harder than estimating a single functional and is
not automatically preferable.  Separate estimates of \(p_a\) and \(p_0\)
compound nuisance error and can be unstable where \(p_0\) is small.  Direct or
classifier-based density-ratio estimation, inverse-probability weighting,
outcome regression, Riesz-representer methods, and cross-fitted doubly robust
estimators may be superior for their respective targets.  The normalizing-flow
plug-in studied here is therefore a transparent baseline, not a consequence
of the RN theorem and not a claimed state-of-the-art estimator.

Our contributions are:
\begin{enumerate}
  \item We formulate the full intervention/observation ratio as a
        regime-indexed output while making explicit that its analytic basis is
        classical and its causal meaning is conditional on prior identification.
  \item We give a held-out moment-transport protocol that treats overlap,
        unweighted gaps, and uncertainty as part of the report rather than
        interpreting a fitted ratio by inspection alone.
  \item We evaluate a two-density plug-in baseline across synthetic and real
        perturbation regimes and document substantial failures under weak
        overlap.
  \item We report that a proposed pairwise residual scorer is not competitive
        for graph recovery, thereby separating possible reuse of a calibrated
        ratio from unsupported causal-discovery claims.
\end{enumerate}

%==================== Section: Causal Density Functions ====================
\section{Causal Density Functions}
\label{sec:cdf}

The construction requires two probability laws on the same measurable outcome
space.  It does not supply either law or the assumptions needed to identify an
interventional law from observational data.  We use standard
change-of-measure notation \citep{billingsley1995probability,halmos1950measure,rudin1987real}.

\begin{definition}[Causal density function]
Let \(Z=(X_1,\ldots,X_d)\) take values in a measurable space
\((\mathcal Z,\mathcal A)\).  Let \(P_0\) be an observational law and let
\(\{P_a:a\in\mathcal I\}\) be a specified family of intervention or
perturbation laws on that same space.  If \(P_a\ll P_0\), the
\emph{causal density function} for regime \(a\) is
\[
  \rho_a(z)
  \;=\;
  \frac{dP_a}{dP_0}(z).
\]
\end{definition}

\paragraph{Identification and domination.}
The adjective ``causal'' is justified only after \(P_a\) has been identified.
With randomized perturbation samples, \(P_a\) is the empirical regime law.  With
observational data alone, assumptions such as consistency, positivity and
conditional exchangeability, or a valid do-calculus identification formula,
must come first.  A deterministic hard intervention on a continuously
distributed coordinate is commonly singular relative to the observational
joint law.  The joint ratio above is then undefined.  Our experiments therefore
concern observed perturbation regimes, dominated soft/stochastic interventions,
or selected outcome marginals on which domination is plausible.

Absolute continuity is a population prerequisite; empirical overlap is a
separate estimation issue.  Even when \(P_a\ll P_0\), large or heavy-tailed
weights make finite-sample estimation unstable.

\paragraph{Interpretation.}
\(\rho_a(z)>1\) marks regions receiving more probability mass under regime
\(a\), while \(\rho_a(z)<1\) marks depleted regions.  Retaining this pointwise
ratio can support several downstream integrals, but doing so estimates a more
difficult object than any one scalar estimand.

The defining property is the Radon--Nikodym calibration identity.  For any
integrable statistic \(f\),
\begin{equation}
  \mathbb{E}_{a}[f(Z)]
  =
  \mathbb{E}_{0}\!\left[f(Z)\rho_a(Z)\right].
  \label{eq:cdf-calibration}
\end{equation}
Equation~\eqref{eq:cdf-calibration} characterizes the exact ratio over a
measure-determining class of functions.  For an estimated ratio, a finite
collection of held-out moment gaps is only a diagnostic for those functions,
not proof of distributional equality.

\paragraph{Three uses.}
Causal density functions can be reused in three computations considered here:
\begin{enumerate}
  \item \emph{Calibration}: test the change-of-measure identity in
        Equation~\eqref{eq:cdf-calibration}.
  \item \emph{Regime-response curves}: estimate expectations along a dominated
        family \(a_x\) of soft or stochastic interventions.
  \item \emph{Exploratory directed scores}: rank ordered variable pairs using
        a ratio-weighted residual heuristic.  This is not a consistent
        structure-learning procedure in the general multivariate case.
\end{enumerate}

\paragraph{Causal density field.}
Collecting the intervention-specific densities yields a regime-indexed field
\[
  \rho(z) = (\rho_a(z):a\in\mathcal I).
\]
This notation is organizational: each component remains an ordinary density
ratio with its own domination and estimation conditions.

\paragraph{Example (finite case).}
For environments $e_0$ (observational) and $e_1$ (interventional) with empirical
distributions $p_0(x)$ and $p_1(x)$ on a finite set $X$,
\[
  \rho(x) = \frac{p_1(x)}{p_0(x)},
\qquad
  \mathbb{E}_{e_1}[f]
  =
  \sum_x f(x)\rho(x)p_0(x).
\]
This is the finite version of the entire proposal: an intervention is represented
by a density ratio, and causal quantities are computed by reweighted
observational expectations.

\section{Analytic Distinctions and Statistical Scope}
\label{sec:analytic-scope}

\paragraph{Intervention ratio versus dependence ratio.}
The intervention ratio \(dP_a/dP_0\) must not be confused with the
observational dependence density
\[
  \frac{dP_{XY}}{d(P_X\otimes P_Y)}(x,y).
\]
The latter equals one under independence and enters mutual information, but it
is neither directed nor causal without additional structure.  We do not use it
as an alternative definition of a causal density.

\paragraph{Finite regime changes versus infinitesimal scores.}
Suppose
\[
  P_\lambda(dx,dy)=Q_\lambda(dx)K_\lambda(x,dy),\qquad
  P_0(dx,dy)=Q_0(dx)K_0(x,dy).
\]
Under the relevant domination conditions,
\[
  \frac{dP_\lambda}{dP_0}(x,y)
  =
  \frac{dQ_\lambda}{dQ_0}(x)
  \frac{dK_\lambda(x,\cdot)}{dK_0(x,\cdot)}(y).
\]
If a soft intervention changes only the input distribution and preserves the
conditional mechanism \(Y\mid X\), the second factor is one.  Separately, for
a differentiable density path, the score is
\(s_\lambda(z)=\partial_\lambda\log p_\lambda(z)\), and
\[
 \log\frac{p_{\lambda+\delta}(z)}{p_\lambda(z)}
 =\delta s_\lambda(z)+o(\delta).
\]
This local expansion is not an identity between a finite regime ratio and the
derivative of an observational dependence ratio.

\paragraph{What ratio consistency requires.}
Pointwise convergence of fitted log-densities does not imply
\(L^1(P_0)\) convergence of their ratio: a fitted denominator may approach zero
on shrinking sets and create arbitrarily large weights.  A valid sufficient
route is convergence \(\widehat\rho_n\to\rho\) in \(P_0\)-probability together
with uniform integrability of \(\{\widehat\rho_n\}\); Vitali's theorem then gives
\[
  \|\widehat\rho_n-\rho\|_{L^1(P_0)}\longrightarrow 0.
\]
Stronger bounded-ratio or uniform log-density assumptions can imply these
conditions, but ordinary pointwise consistency cannot.

\paragraph{What is not claimed.}
The paper makes no categorical universal-property claim.  Kan extensions,
Beck--Chevalley conditions, and a proposed ``RN--Kan duality'' do not provide
the identification, estimator, or consistency guarantee used here and are
therefore omitted.  We also do not claim that the pairwise graph score controls
for confounding, mediators, or multiple parents.

\section{Estimating Causal Density Functions}
\label{sec:kan-do-algms}

We now describe the estimator used in the experiments.  The goal is deliberately
narrow: estimate causal density ratios, test their calibration, and use them to
construct regime-response curves or exploratory directed scores.

\paragraph{Density-ratio estimation.}
For an intervention family \(a=\mathrm{do}(X_i)\), we fit or otherwise estimate
two densities on a shared support: an observational density
\(\hat p_{\mathrm{obs}}\) and an interventional density \(\hat p_a\).  The
plug-in causal density estimate is
\[
  \hat\rho_a(z)
  =
  \exp\!\left(
    \log \hat p_a(z) - \log \hat p_{\mathrm{obs}}(z)
  \right).
\]
In the experiments below, low-dimensional normalizing flows estimate the two
densities.  This plug-in is only one baseline and may be less stable than direct
or classifier-based ratio estimation because error in both densities enters
the quotient.  Inverse-probability and doubly robust methods can be preferable
when their identification assumptions and target functionals apply.

\paragraph{Calibration.}
We fit \(\hat\rho_a\) without using the evaluation fold.  For a prespecified
statistic \(f\), held-out samples give the weighted moment gap
\[
  \widehat\Delta^{\,w}_f(a)
  =
  \left|
  \widehat{\mathbb{E}}_{0}\!\left[f(Z)\hat\rho_a(Z)\right]
  -
  \widehat{\mathbb{E}}_{a}[f(Z)]
  \right|.
\]
We compare it with the unweighted gap
\[
  \widehat\Delta^{\,0}_f(a)
  =
  \left|\widehat{\mathbb{E}}_{0}[f(Z)]
  -\widehat{\mathbb{E}}_{a}[f(Z)]\right|.
\]
Bootstrap intervals or repeated splits should quantify uncertainty.  A finite
collection of small gaps validates only those transported moments; it does not
certify the entire ratio.

\paragraph{Do-curves.}
For a pair \((X_i,Y)\), let \(\{a_x:x\in\mathcal X\}\) be a dominated family of
identified or directly sampled stochastic interventions.  The response curve is
\[
  \widehat m_{i\to Y}(x)
  =
  \mathbb{E}_{\mathrm{obs}}\!\left[
    Y\,\hat\rho_{a_x}(Z)
  \right].
\]
This formula neither identifies \(P_{a_x}\) nor justifies deterministic
\(\mathrm{do}(X_i=x)\) when the resulting law is singular.

\paragraph{Directed edge scoring.}
For causal discovery, we use \(\hat\rho_i\) to score whether an intervention on
\(X_i\) explains \(X_j\).  Let
\(\widehat{\mathbb{E}}[X_j\mid X_i]\) be a fitted conditional mean.  The
pairwise causal-density score is
\[
  s_{ij}
  =
  \mathbb{E}_{\mathrm{obs}}\!\left[
    \hat\rho_i(Z)\,
    \frac{(X_j-\widehat{\mathbb{E}}[X_j\mid X_i])^2}{\mathrm{Var}(X_j)}
  \right].
\]
Lower scores mean only a better pairwise fit under this weighted residual
criterion.  Because the regression omits other variables, the score can confuse
direct effects, mediated associations and common causes.  Sparse selection and
cycle pruning do not repair that limitation.

\begin{algorithm}[t]
\caption{Causal Density Estimation and Scoring}
\label{alg:cdf}
\begin{algorithmic}[1]
\Require Observational samples, interventional/regime samples, variables \(X_1,\dots,X_d\)
\For{each intervention family \(a\)}
  \State Fit \(\hat p_{\mathrm{obs}}\) and \(\hat p_a\) on a shared support.
  \State Compute \(\hat\rho_a(z)=\exp(\log\hat p_a(z)-\log\hat p_{\mathrm{obs}}(z))\).
  \State Evaluate weighted and unweighted held-out gaps for selected \(f\)'s.
\EndFor
\For{each ordered pair \(X_i\to X_j\)}
  \State Fit \(\widehat{\mathbb{E}}[X_j\mid X_i]\).
  \State Compute \(s_{ij}=\mathbb{E}_{\mathrm{obs}}[\hat\rho_i(Z)(X_j-\widehat{\mathbb{E}}[X_j\mid X_i])^2/\mathrm{Var}(X_j)]\).
\EndFor
\State Optionally compute pairwise scores as an exploratory diagnostic.
\State \textbf{Return:} ratio estimates, diagnostics, response curves, and optional scores.
\end{algorithmic}
\end{algorithm}

\begin{proposition}[A sufficient transport condition]
\label{prop:ratio-transport}
Suppose \(\widehat\rho_{a,n}\to\rho_a\) in \(P_0\)-probability and
\(\{\widehat\rho_{a,n}\}\) is uniformly integrable under \(P_0\).  Then
\(\|\widehat\rho_{a,n}-\rho_a\|_{L^1(P_0)}\to0\).  Consequently, for every
bounded measurable \(f\),
\[
 \left|\mathbb E_0[f(Z)\widehat\rho_{a,n}(Z)]
       -\mathbb E_a[f(Z)]\right|\to0.
\]
\end{proposition}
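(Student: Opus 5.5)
The first claim is Vitali's convergence theorem, and I would prove it directly in the form needed here. Write \(g_n=|\widehat\rho_{a,n}-\rho_a|\). Since \(\rho_a=dP_a/dP_0\) is a nonnegative \(P_0\)-integrable function with \(\mathbb E_0[\rho_a]=1\), the singleton \(\{\rho_a\}\) is uniformly integrable. Uniform integrability is preserved under finite sums, so by the triangle inequality \(\{g_n\}\) is uniformly integrable under \(P_0\). By hypothesis, \(g_n\to0\) in \(P_0\)-probability.

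I will use the \(\varepsilon\)-\(\delta\) characterization of uniform integrability on a probability space. Fix \(\varepsilon>0\). Choose \(\delta>0\) so that \(\sup_n\mathbb E_0[g_n\mathbf 1_A]<\varepsilon\) whenever \(P_0(A)<\delta\). Then split
\[
\mathbb E_0[g_n]\le \varepsilon+\mathbb E_0\bigl[g_n\mathbf 1_{\{g_n>\varepsilon\}}\bigr].
\]
Convergence in probability gives \(P_0(g_n>\varepsilon)<\delta\) for all large \(n\), so the second term is below \(\varepsilon\). Hence \(\limsup_n\|\widehat\rho_{a,n}-\rho_a\|_{L^1(P_0)}\le2\varepsilon\), and letting \(\varepsilon\downarrow 0\) proves the first claim.

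The only step needing care is the equivalence between the ``\(\sup_n\mathbb E[|h_n|\mathbf 1_{|h_n|>M}]\to0\)'' definition of uniform integrability and the \(\varepsilon\)-\(\delta\) form used above. On a probability space this is standard: the bound \(\mathbb E[|h|\mathbf 1_A]\le M P(A)+\mathbb E[|h|\mathbf 1_{|h|>M}]\) gives the \(\varepsilon\)-\(\delta\) form. I will cite this as in \citet{billingsley1995probability}, just as the paper invokes Vitali's theorem in Section~\ref{sec:analytic-scope}. Integrability of each \(\widehat\rho_{a,n}\) follows from uniform integrability, so every expectation above is finite.

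For the consequence, let \(f\) be bounded measurable with \(\|f\|_\infty<\infty\). The calibration identity~\eqref{eq:cdf-calibration}, valid since \(P_a\ll P_0\) and \(f\) is \(P_a\)-integrable, gives \(\mathbb E_a[f(Z)]=\mathbb E_0[f(Z)\rho_a(Z)]\). Therefore
\[
\bigl|\mathbb E_0[f\widehat\rho_{a,n}]-\mathbb E_a[f]\bigr|
=\bigl|\mathbb E_0[f(\widehat\rho_{a,n}-\rho_a)]\bigr|
\le\|f\|_\infty\,\|\widehat\rho_{a,n}-\rho_a\|_{L^1(P_0)}\to0,
\]
which completes the argument. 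The convergence is in fact uniform over the unit ball of bounded functions, i.e.\ it holds in total variation.
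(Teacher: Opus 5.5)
Your proof is correct and takes the same route as the paper: Vitali's convergence theorem gives the \(L^1(P_0)\) convergence, and the RN identity together with the \(L^\infty\)--\(L^1\) case of H\"older's inequality gives the transported moments; you prove Vitali inline through the \(\varepsilon\)--\(\delta\) form of uniform integrability, where the paper simply cites it. Your closing remark is also correct and goes slightly beyond the paper: the bound is uniform over \(\|f\|_\infty\le 1\), so \(\widehat\rho_{a,n}\,dP_0\) converges to \(P_a\) in total variation.
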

\begin{proof}
Vitali's convergence theorem gives \(L^1(P_0)\) convergence.  The second claim
follows by H\"older's inequality and the RN identity.
\end{proof}

This proposition assumes ratio convergence; it is not a theorem that
normalizing-flow maximum likelihood automatically supplies it.  A result for a
particular learner requires denominator control, nuisance rates, and suitable
sample-splitting or empirical-process conditions.

\section{Experimental Results}

We audit the submitted ratio pipeline rather than present it as a competitive
causal estimator.  We ask how well fitted ratios transport selected held-out
moments, what response curves the plug-in produces, and how the optional
pairwise graph heuristic behaves on known structures.  The archived runs did
not record unweighted gaps, bootstrap intervals, or competitive dose-response
baselines.  We report that missing evidence explicitly.

\subsection{PISA 2022 Socio--Economic Panel}
\label{app:pisa2022}

We use the OECD PISA socio-economic status (ESCS) Trend extract as a
cross-regime stress test.\footnote{The PISA datasets are available at
\url{https://webfs.oecd.org/pisa2022/index.html.}}
We evaluate causal-density estimation on a four--variable socio--economic panel constructed
from PISA~2022.  
Each row corresponds to a country, and the variables are standardized OECD
indices for economic, social, and cultural status, parental education, home
possessions, and occupational status.  Countries are not interventions and the
variables include constructed socioeconomic indices, so this is not a
ground-truth causal benchmark.  Weighted moment transport is poor for the
selected \texttt{hisei\_trend}/\texttt{escs\_trend} pair
(\(\Delta_y\approx4.0\), \(\Delta_{y^2}\approx20.0\)).  This is the intended
failure signal: cross-country regimes are too heterogeneous for reliable
density-ratio transport.  We infer no causal direction from these scores.

\subsection{Sachs Protein Signaling}
\label{sec:sachs11}

We evaluate causal-density edge scoring on the classical \emph{Sachs} flow–cytometry dataset
of protein–signaling pathways in human immune cells 
(11 phospho–proteins/phospho–lipids, single–cell measurements).
Each stimulation or inhibitor setting defines an experimental regime 
(e.g., \texttt{CD3CD28}, \texttt{PKA\_inh}, \texttt{PKC\_act}, \ldots); 
we encode these as an environment label \texttt{env}.
Following the standard public benchmark \citep{sachs}, each condition records a
stimulation or inhibition of signaling nodes:
\begin{center}
\small
\texttt{\{raf, mek, erk, pka, pkc, pip2, pip3, plcg, akt, p38, jnk\}}.
\end{center}

\paragraph{Results.}
Figure~\ref{fig:s9} visualizes the resulting causal-density edge-score matrix
and the per-variable regime-divergence statistics.
Against the 12-edge reference graph in the released evaluation, the pairwise
scorer has SHD \(30\), F1 \(0.1176\), precision \(0.0909\), and recall
\(0.1667\) (TP \(2\), FP \(20\), FN \(10\)).  The regime-penalized variant has
SHD \(16\) and F1 \(0.1111\) (TP \(1\), FP \(5\), FN \(11\)).  These values do
not support a structure-learning claim.

\begin{figure}[t]
  \centering
  \includegraphics[width=.5\linewidth]{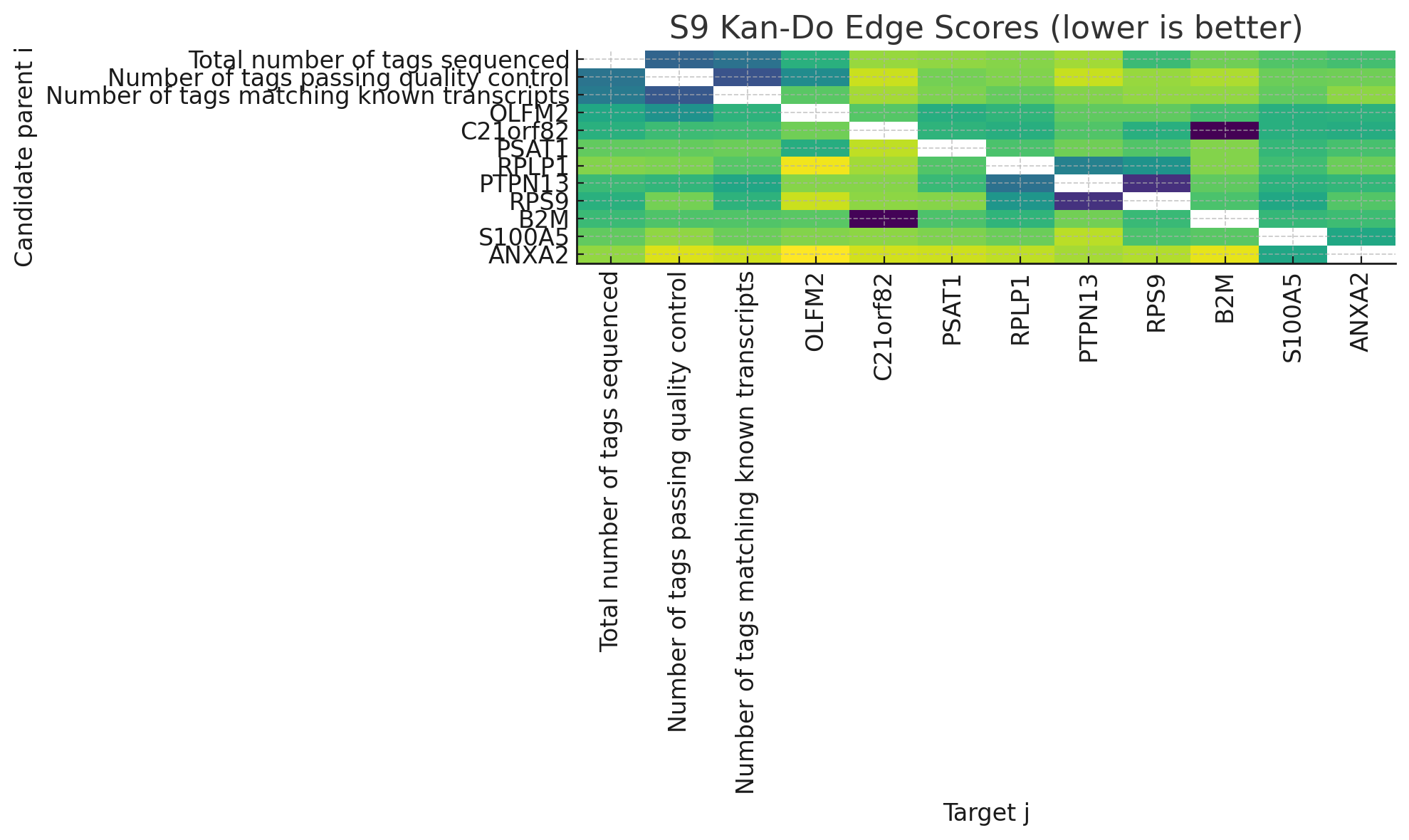}\hfill
  \includegraphics[width=.5\linewidth]{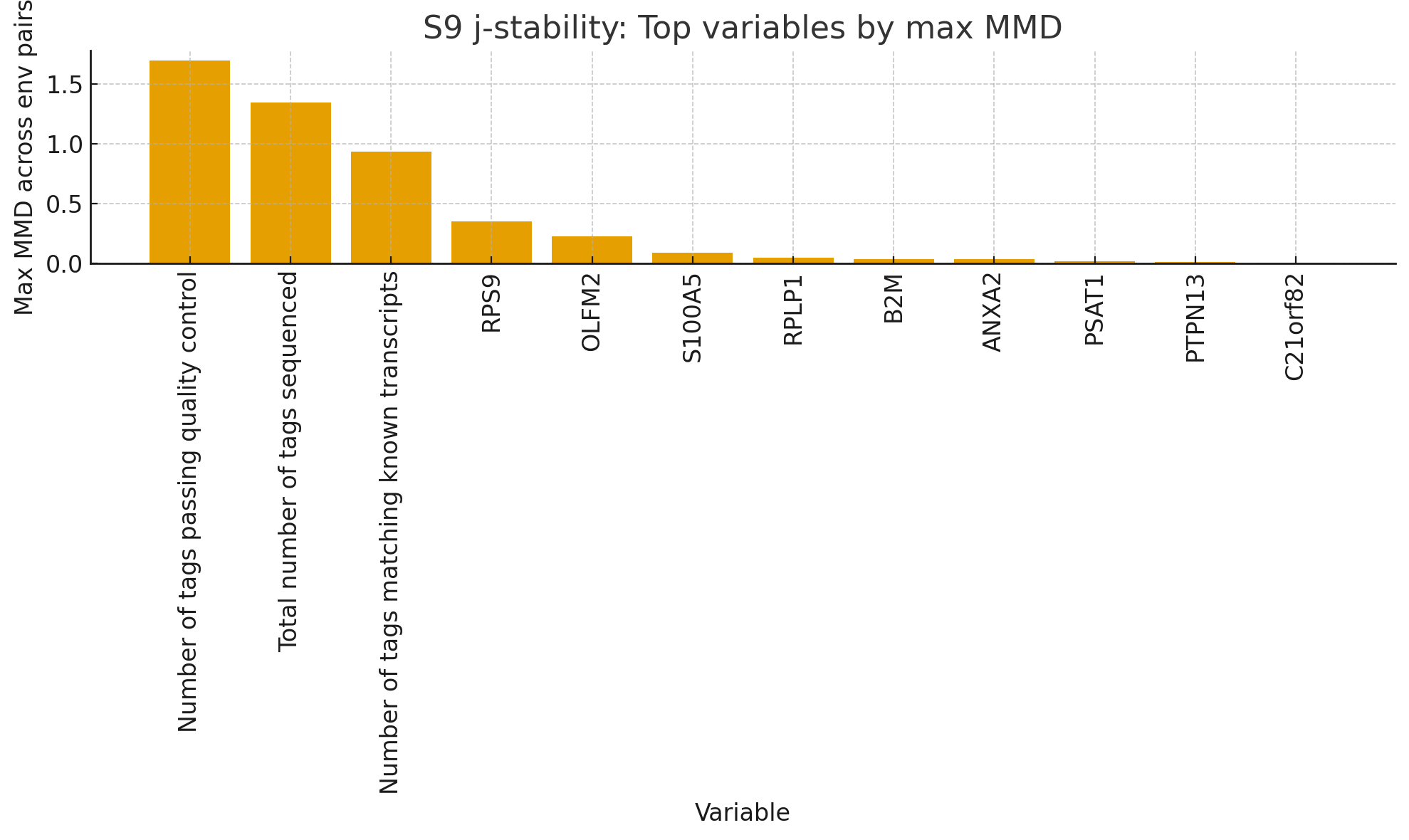}
  \caption{\textbf{Sachs causal-density results.} 
  Left: causal-density edge-score heatmap ($s_{ij}$; lower is better). 
  Right: regime divergence (max MMD per variable across environments). 
  The heatmap is exploratory; the corresponding pairwise graph has F1
  \(0.1176\).}
  \label{fig:s9}
\end{figure}

\subsection{Multi-Regime Chain Experiment}
\label{sec:sheaf}
To test causal-density scoring when multiple regimes are available, 
we construct a synthetic dynamical system in which each regime represents a local 
perturbation of an underlying causal process. The domain consists of $d=8$ scalar variables forming a 
linear chain $X_1 \to X_2 \to \cdots \to X_8$. Within each regime $r \in \{1,\ldots,4\}$, the mechanism 
for each variable is given by 
\[
  X_j^{(r)} = 0.8\,X_{j-1}^{(r)} + \epsilon_j^{(r)}, 
  \qquad \epsilon_j^{(r)} \sim \mathcal{N}(0, 0.4^2),
\]
with soft interventions applied to a small subset of nodes by shifting their means 
(e.g., $X_1 \!+\! 0.8$ in regime 2 and $X_2 \!-\! 0.6$ in regime 3). 
This generates a set of overlapping local models with a known chain ground truth.

Figure~\ref{fig:sheaf_fast} illustrates the resulting adjacency matrices and edge--score field.  
The score matrix contains some low-scoring true pairs but also misses true
edges and introduces false positives.  With the current pairwise scorer and
sparse selection rule,
the recovered graph achieves SHD \(=8\) and F1 \(=0.33\) (Table~\ref{tab:kan-do-structure-recovery}).
This is a negative structure-learning result, not evidence of consistency.

\begin{figure}[t]
  \centering
  \includegraphics[width=.32\linewidth]{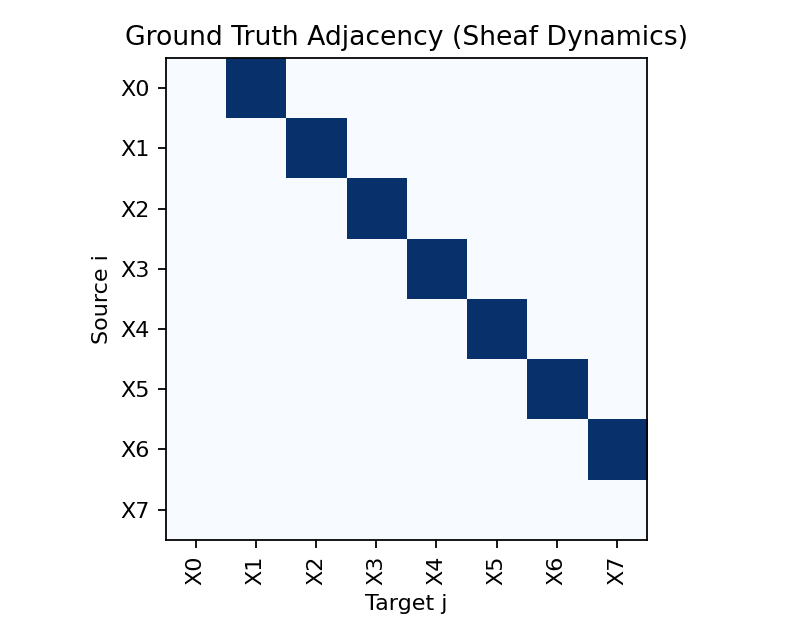}\hfill
  \includegraphics[width=.32\linewidth]{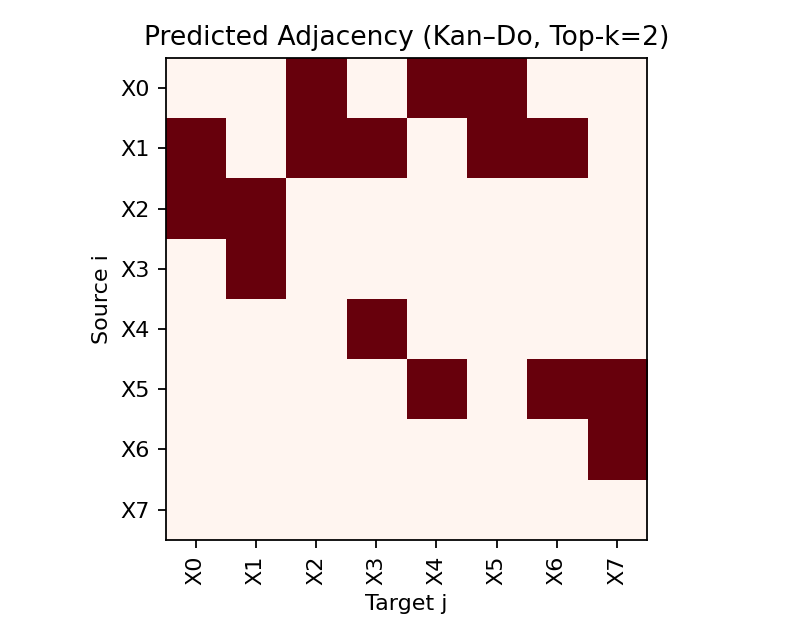}\hfill
  \includegraphics[width=.32\linewidth]{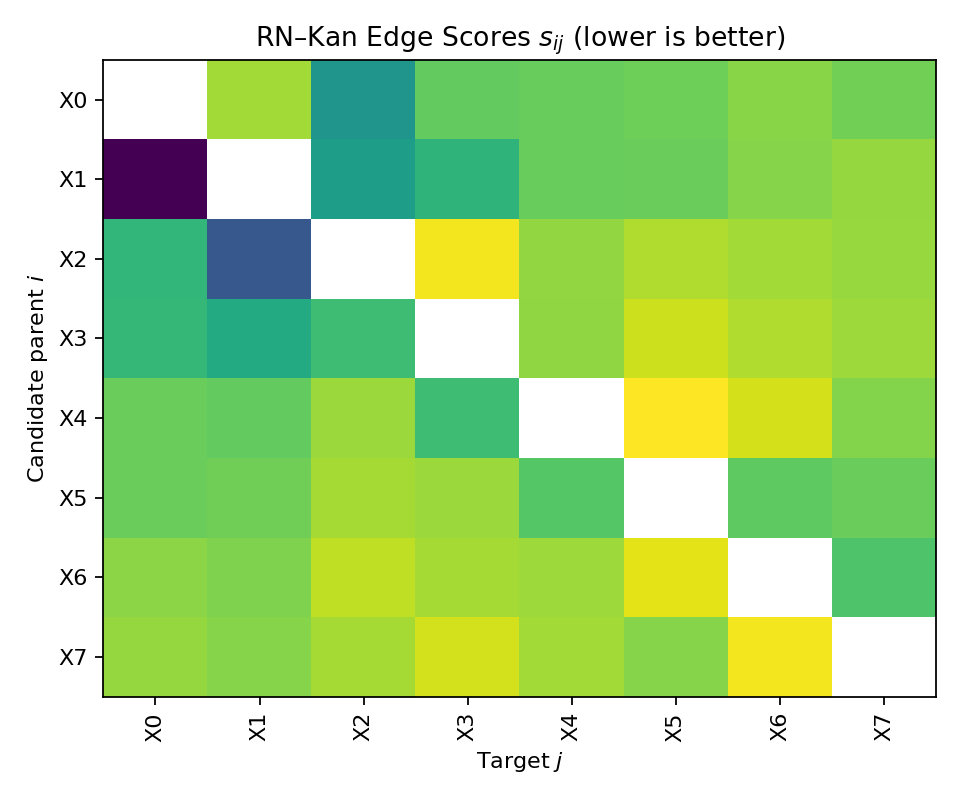}
  \caption{\textbf{Multi-regime chain causal-density scores.} 
  Left: Ground--truth chain structure ($X_0\!\to\!X_1\!\to\!\cdots\!\to\!X_7$). 
  Middle: predicted adjacency from the pairwise causal-density scorer. 
  Right: causal-density edge-score matrix $s_{ij}$ (lower is better). 
  Extra edges remain, but the correct chain directionality is visible in the 
  low-score band near the diagonal.}
  \label{fig:sheaf_fast}
\end{figure}

\begin{table}[t]
\centering
\caption{\textbf{Archived weighted moment-transport gaps.}
Absolute gaps between
$\mathbb{E}_{\mathrm{obs}}[f(Y)\rho]$
and
$\mathbb{E}_{\mathrm{do}}[f(Y)]$
for $f(y)=y$ and $f(y)=y^2$.
Smaller is better, but the archived runs lack unweighted gaps and sampling
intervals and therefore do not establish a benefit from weighting.}
\vspace{0.5em}
\begin{tabular}{lcc}
\toprule
\textbf{Dataset / Pair} & $\bigl|\Delta_{f(y)=y}\bigr|$ & $\bigl|\Delta_{f(y)=y^2}\bigr|$ \\
\midrule
Synthetic Gaussian          & $2.023$ & $0.107$ \\ \hline
LINCS                  & $0.127$      & $0.322$      \\
(HSPA8 $\to$ CDC25B)    & & \\  \hline 
PISA2022      & $4.145$       & $20.005$          \\
(hisei\_trend $\to$ escs\_trend) & & \\
\bottomrule
\end{tabular}
\label{tab:rncalib_comparison}
\end{table}

\paragraph{Cross-domain behavior of RN calibration.}
Table~\ref{tab:rncalib_comparison} reports the values in the released CSVs.
They contradict the earlier manuscript's claim of near-exact synthetic
calibration: the first-moment gap is \(2.023\).  LINCS has gaps \(0.127\) and
\(0.322\); PISA has gaps \(4.145\) and \(20.005\).  No corresponding Sachs
calibration CSV was found, so the previously reported \(0.01/0.02\) values are
withdrawn.  Without unweighted gaps and uncertainty, the table audits plug-in
outputs but does not show that weighting improves transport.

\subsection{Discussion: Causal Density and Regime Overlap}
\label{sec:discussion}

The present results do not show that the pairwise score identifies stable
mechanisms.  Sachs F1 is about \(0.12\), synthetic-DAG F1 is \(0.10\), and
multi-regime-chain F1 is \(0.33\).  The useful lesson is procedural:
downstream reuse should follow held-out, baseline-relative transport checks,
and graph recovery requires multivariate control for alternative parents and
indirect paths.

\subsection{Synthetic Benchmark Experiments}
\label{app:synthetic}

To complement the real--world experiments (Sachs, LINCS, PISA) and the 
multi-regime chain example, we evaluate causal-density scoring on standard 
synthetic causal discovery benchmarks.  
These synthetic tests use linear--Gaussian Erd\H{o}s--R\'enyi DAGs, following
the setting used by NOTEARS, DCDI, CAM, and related structure-learning methods.
Table~\ref{tab:kan-do-structure-recovery} summarizes directed structure
recovery for the current pairwise scorer.

\begin{table}[t]
\centering
\caption{\textbf{Directed structure recovery for causal-density edge scoring on synthetic and real datasets.}}
\begin{tabular}{lcc}
\toprule
Dataset & SHD $\downarrow$ & F1 $\uparrow$ \\
\midrule
Synthetic DAG (ER, d=10)          & 37 & 0.10 \\
Multi-regime chain                & 8  & 0.33 \\
Sachs (pairwise)                  & 30 & 0.12 \\
Sachs (multi + regime penalty)    & 16 & 0.11 \\
\bottomrule
\end{tabular}
\label{tab:kan-do-structure-recovery}
\end{table}

The pairwise scorer is not competitive with mature graph-optimization
methods such as NOTEARS \citep{zheng2018dags} or DCDI
\citep{brouillard2020differentiablecausaldiscoveryinterventional}.  We therefore treat graph recovery as a
failure analysis rather than a causal-discovery contribution.  The current
transport and response-curve results are also incomplete without unweighted
baselines, uncertainty, and competitive ratio or dose-response estimators.

\section{Conclusion}

The defensible proposal is modest: when a dominated interventional law is
already identified or directly sampled, retain its full ratio to an
observational reference as a reusable, regime-indexed object.  Its value must
be demonstrated rather than assumed—through cross-fitted transport tests,
baseline comparisons, uncertainty, and downstream tasks that genuinely benefit
from reuse.  The present plug-in experiments expose important failures and do
not establish such a benefit, but they define a concrete evaluation protocol
for determining when the more ambitious full-ratio target is warranted.

\section{Limitations and Future Work}

We have defined causal density functions as Radon--Nikodym ratios between
specified interventional and observational laws.  This terminology does not
make density ratios novel, identify an interventional law, or avoid positivity
and overlap requirements.  Hard interventions may be singular, in which case
the joint ratio is undefined.  Estimating two full densities can be harder and
less stable than direct ratio, IPW, outcome-regression, Riesz, or doubly robust
alternatives.  The archived transport results lack unweighted baselines and
uncertainty, so they cannot establish practical benefit.  The pairwise graph
score omits multivariate adjustment and performs poorly; it should not be used
as a structure learner.  Finally, the largest experiment has about 30
variables, while reliable full-ratio estimation in high dimension will require
additional structure and much stronger evaluation.

%\bibliography{allcitations}

\begin{thebibliography}{21}
\providecommand{\natexlab}[1]{#1}
\providecommand{\url}[1]{\texttt{#1}}
\expandafter\ifx\csname urlstyle\endcsname\relax
  \providecommand{\doi}[1]{doi: #1}\else
  \providecommand{\doi}{doi: \begingroup \urlstyle{rm}\Url}\fi



\bibitem[Billingsley(1995)]{billingsley1995probability}
Patrick Billingsley.
\newblock \emph{Probability and Measure}.
\newblock Wiley, 3rd edition, 1995.

\bibitem[Brouillard et~al.(2020)Brouillard, Lachapelle, Lacoste, Lacoste-Julien, and Drouin]{brouillard2020differentiablecausaldiscoveryinterventional}
Philippe Brouillard, Sébastien Lachapelle, Alexandre Lacoste, Simon Lacoste-Julien, and Alexandre Drouin.
\newblock Differentiable causal discovery from interventional data.
\newblock In \emph{Advances in Neural Information Processing Systems}, 2020.

\bibitem[Halmos(1950)]{halmos1950measure}
Paul~R. Halmos.
\newblock \emph{Measure Theory}.
\newblock Van Nostrand, 1950.

\bibitem[Imbens and Rubin(2015)]{rubin-book}
Guido~W. Imbens and Donald~B. Rubin.
\newblock \emph{Causal Inference for Statistics, Social, and Biomedical Sciences: An Introduction}.
\newblock Cambridge University Press, USA, 2015.
\newblock ISBN 0521885884.

\bibitem[Pearl(2009)]{pearl-book}
Judea Pearl.
\newblock \emph{Causality: Models, Reasoning and Inference}.
\newblock Cambridge University Press, USA, 2nd edition, 2009.
\newblock ISBN 052189560X.

\bibitem[Rudin(1987)]{rudin1987real}
Walter Rudin.
\newblock \emph{Real and Complex Analysis}.
\newblock McGraw-Hill, 3rd edition, 1987.

\bibitem[Sachs et~al.(2005)Sachs, Perez, Pe'er, Lauffenburger, and Nolan]{sachs}
Karen Sachs, Diego Perez, Dana Pe'er, Douglas~A Lauffenburger, and Garry~P Nolan.
\newblock Causal protein-signaling networks derived from multiparameter single-cell data.
\newblock \emph{Science}, 308\penalty0 (5721):\penalty0 523--529, 2005.
\newblock \doi{10.1126/science.1105809}.

\bibitem[Zheng et~al.(2018)Zheng, Aragam, Ravikumar, and Xing]{zheng2018dags}
Xun Zheng, Bryon Aragam, Pradeep Ravikumar, and Eric~P Xing.
\newblock Dags with no tears: Continuous optimization for structure learning.
\newblock In \emph{Advances in Neural Information Processing Systems}, 2018.

\end{thebibliography}
%\bibliographystyle{plainnat}

%\input{checklist.tex} 

\appendix
\onecolumn
\section*{Supplementary Materials}

\section{Changes in This Version}
\label{app:changes}

This version substantially narrows and corrects the paper.
\begin{enumerate}
\item \textbf{Classical status and causal scope.}
The intervention/observation Radon--Nikodym derivative is now identified
explicitly as a classical density ratio and its expectation identity as
importance weighting.  It does not identify an interventional law.  Causal
interpretation requires an already identified or directly observed regime law,
and the joint ratio requires absolute continuity.  Potentially singular hard
interventions are no longer treated as automatically covered.
\item \textbf{Removal of categorical claims.}
The RN--Kan theorem, Kan--Do calculus, Beck--Chevalley/do-calculus claims, and
categorical interpretation of DCDI have been withdrawn from the manuscript.
They did not derive the estimator and contained typing and naturality errors.
\item \textbf{Corrected analysis.}
The observational dependence ratio is separated from the intervention ratio,
and finite regime changes are separated from infinitesimal path scores.  The
incorrect claim that pointwise log-density convergence implies
\(L^1(P_0)\) ratio convergence is replaced by a sufficient result using
convergence in probability and uniform integrability.
\item \textbf{Repositioned estimation and diagnostics.}
The two-density normalizing-flow plug-in is presented as a baseline rather than
an advantage over direct density-ratio, inverse-probability, Riesz, outcome
regression, or doubly robust methods.  Moment transport is described as a
held-out diagnostic that requires unweighted baselines and uncertainty; finitely
many transported moments do not certify an entire density.
\item \textbf{Corrected empirical record.}
Claims inconsistent with the released artifacts are withdrawn.  In particular,
the archived synthetic moment gaps are \(2.0228\) and \(0.1069\), not
\(0.001\) and \(0.002\); no released file supports the former Sachs
\(0.01/0.02\) calibration claim; and the former Sachs F1 \(0.68\) and chain
F1 \(0.94\)/SHD \(1\) claims are removed.  The verified graph results are
reported as a negative result for the pairwise scorer.
\item \textbf{Bibliographic corrections.}
The invalid ``Ke et al.'' record was removed, the duplicate DCDI entry was
replaced by the correct NeurIPS 2020 citation, and the intervention-permutation
paper's authorship was corrected.
\end{enumerate}

The supplement reports the additional datasets and the archived outputs used in
the audit.  The earlier categorical supplement and its RN--Kan, Kan--Do, and
Beck--Chevalley claims have been removed because they neither derive the
estimator nor survive the required typing checks.

\section{Additional Experimental Details}
\label{app:experiments}

The released CSV files, rather than prose claims in the submitted appendix,
are the source of truth.  Table~\ref{tab:artifact-audit} records the verified
graph metrics.

\begin{table}[h]
\centering
\caption{\textbf{Verified graph-recovery artifacts.}  All metrics are for
directed edges.}
\begin{tabular}{lrrrrrrr}
\toprule
Run & SHD & F1 & Precision & Recall & TP & FP & FN\\
\midrule
Synthetic DAG & 37 & 0.0976 & 0.2857 & 0.0588 & 2 & 5 & 32\\
Sachs pairwise & 30 & 0.1176 & 0.0909 & 0.1667 & 2 & 20 & 10\\
Sachs regime-penalized & 16 & 0.1111 & 0.1667 & 0.0833 & 1 & 5 & 11\\
Chain pairwise & 17 & 0.2609 & 0.1875 & 0.4286 & 3 & 13 & 4\\
Chain regime-penalized & 8 & 0.3333 & 0.4000 & 0.2857 & 2 & 3 & 5\\
\bottomrule
\end{tabular}
\label{tab:artifact-audit}
\end{table}

The previously stated Sachs F1 \(0.68\) and chain F1 \(0.94\)/SHD \(1\)
do not appear in the released metrics and are withdrawn.  Likewise, the
released synthetic calibration CSV contains gaps \(2.0228\) and \(0.1069\),
not \(0.001\) and \(0.002\).  The LINCS and PISA files contain weighted gaps
\((0.1267,0.3222)\) and \((4.1450,20.0047)\), respectively.  The archived
evaluation lacks unweighted moment gaps, resampling intervals, direct-ratio
baselines, and competitive dose-response estimators.  A future experiment
must preregister the statistics \(f\), use cross-fitting, report effective
sample size and weight tails, and compare weighted with unweighted transport.

\end{document}